\newcolumntype{M}[1]{>{\centering\arraybackslash}m{#1}}
\newcolumntype{P}[1]{>{\centering\arraybackslash}p{#1}}
\newtheoremstyle{scsthe}
{8pt}
{8pt}
{\it}
{}
{\bf}
{.}
{.5em}
{}
\theoremstyle{scsthe}
\newtheorem{theorem}{Theorem}
\newtheorem{definition}{Definition}
\newtheorem{lemma}{Lemma}[]
\begin{document}
\title{Limitations on Observability of Effects\\in Cyber-Physical Systems}
\titlenote{Approved for Public Release; Distribution Unlimited. Public Release Case Number 18-4540}

\author{Suresh K. Damodaran}
\orcid{1234-5678-9012}
\affiliation{%
  \institution{MITRE}
  \streetaddress{MS M339, 202 Burlington Rd.}
  \city{Bedford}
  \state{MA}
  \postcode{01730}
}
\email{sdamodaran@mitre.org}
\author{Paul D. Rowe}
\orcid{1234-5678-9012}
\affiliation{%
  \institution{MITRE}
  \streetaddress{MS M339, 202 Burlington Rd.}
  \city{Bedford}
  \state{MA}
  \postcode{01730}
}
\email{prowe@mitre.org}

\renewcommand{\shortauthors}{S. K. Damodaran and P. D. Rowe}

\begin{abstract}
Increased interconnectivity of Cyber-Physical Systems, by design or otherwise, increases the cyber attack surface and attack vectors. Observing the effects of these attacks is helpful in detecting them.
In this paper, we show that many attacks on such systems result in a control loop effect we term Process Model Inconsistency (PMI). Our formal approach elucidates  the relationships among incompleteness, incorrectness, safety, and inconsistency of process models. We show that incomplete process models lead to inconsistency. Surprisingly, inconsistency may arise even in complete and correct models. 
We illustrate our approach through an Automated Teller Machine (ATM) example, and describe the practical implications of the theoretical results.
\end{abstract}

%
%

\begin{CCSXML}
<ccs2012>
<concept>
<concept_id>10002978.10002986.10002989</concept_id>
<concept_desc>Security and privacy~Formal security models</concept_desc>
<concept_significance>500</concept_significance>
</concept>
<concept>
<concept_id>10002978.10003001.10003003</concept_id>
<concept_desc>Security and privacy~Embedded systems security</concept_desc>
<concept_significance>500</concept_significance>
</concept>
</ccs2012>
\end{CCSXML}

\ccsdesc[500]{Security and privacy~Formal security models}
\ccsdesc[500]{Security and privacy~Embedded systems security}

\keywords{Cyber-Physical Systems, Embedded Systems, Cyber Attacks, Attack Detection, Dynamic Behavior Modeling}

\settopmatter{printacmref=false}

\maketitle

\section{Introduction}  \label{sec:introduction}
Cyber-Physical Systems can range from industrial control systems (ICS) to Internet of Things (IoT) systems, and encompass a wide variety of  protocols, buses, and networks. While the definition of a Cyber-Physical System (CPS) is still evolving, we assume a CPS consists of interacting networks of physical devices and computational components that may be remotely controlled \cite{lee2008cyber}. While an earlier CPS may have been designed as a stand-alone and isolated system, modern CPS are designed with connectivity assumptions.  For example, the three-tier architecture for modern IoT systems described in \cite{lin2015industrial}  makes connectivity assumptions explicit. Due to intentional or unintentional network connectivity through the Internet or other means, no CPS can be assumed to be isolated.  The consequence of the increased interconnectivity among the systems is the addition of new cyber attack surfaces, and  vulnerabilities exploitable with new or existing attack vectors. 

To detect an attack, or evaluate the effect of an attack on a system, accurately observing the system state is very useful. A key question in this context is, "are there limitations on the observability of the system state that reduce the ability to conclude that an attack is happening or has happened?" Our main contribution in this paper is an answer to this question. We define Process Model Inconsistency (PMI) effect, and establish how PMI can be manifested as a result of several attacks identified in the literature (Section \ref{sec:process-model}).  In Section \ref{sec:states} we prove the limitations on observability of PMI effects. We illustrate how PMI effects may be masked with an example in Section \ref{sec:case-study}. In Section \ref{sec:implications}, we describe practical implications of limitations on observability, and we conclude in Section \ref{sec:conclusions}.

\section{Related Works}  \label{sec:related-works}

An effect is a consequence of an attack. Distinct attacks may result
in the same effect. This implies there could be physical effects as a
result of cyber attacks, and there could also be cyber effects on
system components as a result of physical attacks.  For example, a
tampered tire pressure gauge may show low tire pressure indicator on
the dashboard of a car, while there is perfectly adequate tire
pressure.  Cyber effects can manifest in various domains, ranging from
political instability to accidents. Ormrod et
al. \cite{ormrod2015system} present a System of Systems (SoS) cyber
effects ontology that attempts to capture the breadth of cyber effects
across physical, virtual, conceptual, and event domains within the
context of a battle. The cyber effects on human decision making,
originating from passive and active cyber attacks, is studied by
Cayirci et al. \cite{cayirci2011modeling}. Huang et al. have
analytically assessed the physical and economical consequences of
cyber attacks \cite{huang2009understanding}.

The physical effects on a CPS vary widely depending on the system and where the CPS is used, and therefore are difficult to categorize. In contrast, the cyber effects are relatively easier to categorize. The following categories of cyber effects are identified for Army combat training: Denial of Service (DoS), Information Interception, Information Forgery, and Information Delay \cite{marshallcobwebs2015}.  The cyber effects on the controller of the physical process change the operations of the process control system and the physical process in subtle ways. For process control systems, the security and protection of information is not enough, and it is necessary to see how the attacks affect estimation and control algorithms of a CPS, thus directly changing the physical world \cite{cardenas2011attacks}.  Han et al. \cite{han2014intrusion}  describe more obvious effects of cyber attacks such as draining out limited power of sensors, disrupted or incorrect routing, desynchronization, and privacy invasion through eavesdropping. Wardell et al. \cite{wardell2016method}   add changing of set points, sending harmful control signals, changing operator display to the list of effects. Cardenas et al. \cite{cardenas2008secure} describe physical attacks on sensors, actuators, or on the physical plant,  deception attacks carried through the compromises of sensors and actuators, and DoS attacks that make signals unavailable to the controller or physical process.  The effects of these attacks to the system could be missing or altered signals.  Mitchel and Chen  \cite{mitchell2016modeling} specify three types of failures in the context of modern power grids: \textit{attrition failure} when there are insufficient actuators or control nodes to apply control, \textit{pervasion failure} when the failed actuators and control nodes collude, and \textit{exfiltration failure} when the adversary obtains the grid data. Some of these cyber effects could also deceive a human operator who is part of the decision making process of the controller \cite{wardell2016method}. 

Developing a behavior model for a non-trivial CPS is a challenging problem because of the diversity of the system components, the phases of operations, and the need to reconcile the control, physical, software, and hardware models. Rajhans et al. \cite{rajhans2014supporting} provide a framework to integrate heterogeneous aspects of a system into a consistent verifiable behavior model. This problem is worsened in practice  due to the unavailability of any documented specifications for some system components. Pajic et al. \cite{pajic2017attack} discuss resilient statistical state estimation techniques while under attack as part of the DARPA HACMS project. Ability to sufficiently observe the properties of the CPS in operation is a requirement for accurate state estimation.

CPS effects have been analyzed by modeling a CPS as a multi-layered system with physical, sensor/actuator, network, and control layers  \cite{han2014intrusion,ashibani2017cyber}. Alternately, CPS effects have been  also studied by modeling it as a control system  \cite{cardenas2008secure,hahn2015multi,nourian2015systems}.  The control system based modeling approach  opens up specific ways to identify cyber effects and link them to attacks, as discussed in the next section. 

\begin{figure}

  \includegraphics[height=3in, width=3in]{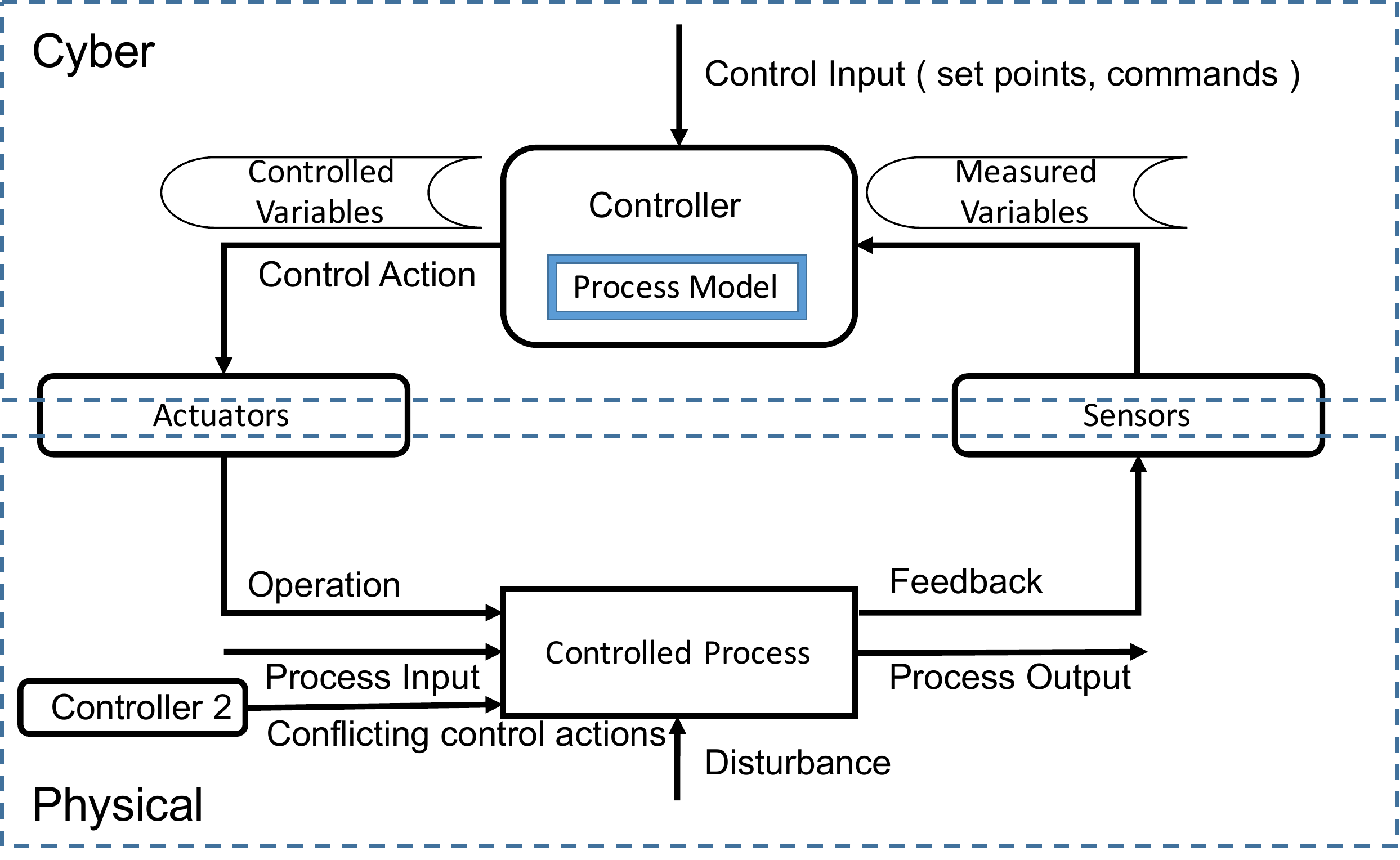}
  \caption{CPS Control Loop}  \label{fig:control-loop}

\end{figure}


\section{Process Model Inconsistency} \label{sec:process-model} Clark
and Wilson have argued \cite{clark1987comparison} that
\textit{external consistency}, the correspondence between the data
object and the real object it represents, is an important control to
prevent fraud and error. An attack may cause a change in the system
that we term an \textit{effect}. An effect may cause other changes in
the system that we term \textit{derived effects}. In this section, we
show why process model inconsistency is an important derived effect on
CPS through a control loop based analysis of cyber and physical
effects. In Section \ref{sec:control-loop}, we describe the control
loop view of CPS. We map the cyber and physical effects to derived
effects to control loop elements, and show in Section
\ref{sec:loop-impacts} that inconsistency of process model is an
important category of effect to a control loop.

\subsection{Control Loop View of  CPS}  \label{sec:control-loop}
A control loop, as shown in Figure \ref{fig:control-loop}, is a significant and discernible feature of any non-trivial CPS. A CPS that is a SoS may contain multiple such control loops that span multiple subsystems \cite{krotofil2013resilience}.  Figure \ref{fig:control-loop} describes  a simplified control loop derived from the descriptions of a control loop used for safety analysis \cite{leveson2011engineering}  and security analysis~\cite{nourian2015systems}. 

The \textit{controller} in Figure \ref{fig:control-loop} can include humans, or can be a fully-automated system, or it can be a semi-automated system with humans-in-the-loop. Stated differently, a human can be considered a part of the controller. Humans may also be participants in the \textit{controlled process}, responsible for providing sensory inputs to the \textit{controller}. The \textit{controller} may include estimation algorithms and controlling algorithms \cite{cardenas2008secure}. The \textit{controller} receives inputs from external entities such as set points, or other commands such as resets. All controllers must maintain a model of the \textit{controlled process}, called a \textit{process model}, within \cite{leveson2011engineering}. The dynamic behavioral aspects of the model are constructed and maintained by the \textit{controller} by processing the \textit{measured variable}s provided by the \textit{sensors}. The \textit{sensors} may provide these inputs to the \textit{controller} directly or indirectly through stored media such as logs. A \textit{controller} can represent the dynamic behavior of the \textit{controlled process}  in a \textit{process model}. A \textit{process model} is a behavioral model of the controller, and one way to represent it is by using a state diagram. DEVS formalism \cite{Zeigler:2000:TMS:580780} and  hybrid automata \cite{henzinger2000theory} use state diagrams to represent \textit{process model}s. 

 A state in a state diagram at time $t_0$ is identified by the values assigned to a set of state variables at $t_0$ by the \textit{controller}. These state variables are distinct from the \textit{measured variable}s in that the \textit{measured variable}s are communicated through the communication link between the \textit{sensors} and the \textit{controller}, while the state variables are maintained by the \textit{controller} within the \textit{process model}. The state variables have the important property that the values of the state variables at time $t_0$, combined with the values of \textit{measured variable}s collected during the time period between $t_0$ and time $t > t_0$ are sufficient to predict the values of state variables at time $t$, assuming the \textit{controller} processes the \textit{measured variable}s instantaneously.  The \textit{controller} provides the \textit{controlled variable}s to the \textit{actuators} which apply corresponding \textit{operation}s to the \textit{controlled process}. The system could include humans providing inputs to actuators as well. For example, in an elevator system, the users of the system can be considered to provide sensory inputs or \textit{measured variable}s to the controller by pressing buttons.

We highlight the upper half of Figure \ref{fig:control-loop} as  the \textit{cyber domain}, and the lower half as the \textit{\textit{physical domain}}, to signify the digital processing of information in the upper half.  The \textit{controller}, in this figure, is a digital information system that receives and produces digital values, while the \textit{controlled process}  is operated with analog inputs and processes. In a large system, however, designating a system component as either a cyber or physical component requires additional considerations.  For example, it is possible that the \textit{controlled process}  has some digital components, and the operational input from the \textit{actuators}  are digital values. In such cases, if the cyber elements are fully embedded in the physical system and  the cyber elements  are not explicitly modeled in the process model, we assume it is a physical component for the purpose of our analysis.

\subsection{Mapping Effects to Control Loop Effects}  \label{sec:loop-impacts}
Leveson \cite{leveson2013stpa} describes a number of things that can go wrong in the control loop from a safety perspective. While the  cyber attacks can cause the same kinds of effects as natural faults occurring with aging or fatigued components, analyzing cyber attacks can be more complex due to multiple reasons. Cyber attacks are intentional activities, as opposed to natural faults. Cyber attacks may cause multiple effects in different times  from the same attack, wheres the probability of occurrence of multiple natural faults is lower than the probability of occurrence of a single fault. Often the effects may be chained together to form a kill-chain. A kill-chain for a CPS may begin with a reconnaissance phase for collecting information flowing through the system as preparation for mounting an attack on the \textit{controlled process} \cite{hahn2015multi}. The effects caused by cyber attacks may seem unrelated as well, since the attacker has control over what effects are applied, and where. 

 In Table \ref{table:control-effects}, we map the different types of cyber effects  \cite{cardenas2008secure,han2014intrusion,marshallcobwebs2015,wardell2016method}, and their impacts to control loop components (Figure \ref{fig:control-loop}).  Cyber effects in CPS can also be caused by physical attack. For example, sensors or actuators may be damaged, tampered with, or attacked with electromagnetic pulse (EMP) \cite{cayirci2011modeling}, resulting in altering the \textit{measured variable} values or applying wrong or delayed operations to the \textit{controlled process}. Therefore, cyber effects in CPS can result from cyber, or physical attacks, as shown in Table \ref{table:physical-effects}.

\small
\begin{table}
  \caption{Control loop effects from cyber effects }
  \label{table:control-effects}
  \begin {tabular}  { | M{2cm} | M{2.4cm} | m{2.9cm}  |} 

    \hline
    \centering\textbf{Cyber Effects Type}
    & \centering\textbf{Control Loop Element}
    & {\centering\arraybackslash\textbf{Control Loop Effects}}  \\ \hline
    \multirow{3}{2cm}[-1.5mm]{\centering Information deception including forgery, spoofing, replay}
    & \centering Measured variable
    &  Inaccurate estimated \newline values and inconsistent \textit{process model}, Operator deception\\ \cline{2-3}
    & \centering Controlled variable
    & Altered controlled process    \\ \cline{2-3}
    & \centering Control input
    &  Altered controller\newline algorithm \\ \hline
    \multirow{2}{2cm}[-2mm]{\centering Information interception}
    & \centering Measured variable, Feedback
    & Inaccurate estimated values and inconsistent \textit{process model}, Operator deception\\ \cline{2-3}
    & \centering Controlled variable, Operation
    & Altered operation of the \textit{controlled process}   \\ \hline
    \multirow{2}{2cm}[-2mm]{\centering Information flooding (DoS)}
    & \centering Measured variable, Feedback
    &  Inaccurate estimated values and inconsistent \textit{process model}, Operator deception \\ \cline{2-3}
    & \centering Controlled variable, Operation
    & Altered operation of the \textit{controlled process}  \\ \hline
    \multirow{2}{2cm}[2mm]{\centering Information timing including delay, desynchronization}
    & \centering Measured variable
    &  Inaccurate, delayed estimated values and inconsistent \textit{process model}, Operator deception \\ \cline{2-3}
    & \centering Controlled variable, Operation
    &  Altered operation of the \textit{controlled process}  \\ \hline
    \centering Information exfiltration
    & \centering Measured variable, Feedback,
      Controlled variable, Operation
    & Privacy violation \\ \hline
    \centering Process input tampering
    & \centering Controlled process
    & Altered controlled process \\ \hline
    \centering Process output tampering
    & \centering Controlled process
    & Altered \textit{process output} \\ \hline
    \centering Information interception
    &\centering Control input
    & Conflicting control inputs and incorrect control behavior   \\ \hline
  \end{tabular}
\end{table}

\normalsize

\small
\begin{table}[h]
    \caption{Control loop effects from physical effects }
      \label{table:physical-effects}
  \begin {tabular}  { | M{1.8cm} | M{1.8cm} | m{3.1cm}  |} 
    \hline
    \centering\textbf{Physical Effects Type}
    & \centering\textbf{Control Loop Element}
    & \centering\arraybackslash\textbf{Immediate Cyber \& Control Loop Effects}  \\ \hline
    \multirow{2}{1.8cm}[-4.5mm]{\centering Physical tampering}
    & Sensor
    & Information deception,\newline Information interception, Information
      timing\\ \cline{2-3}
    & Actuator
    & Information interception, Altered operation of the
      \textit{controlled process}, Attrition failure, Pervasion
      failure \\ \hline
    \multirow{2}{1.8cm}[-4.5mm]{\centering Drained power}
    & Sensor
    &  Information deception,\newline Information interception, Information
      timing\\ \cline{2-3}
    & Actuator
    &   Information interception, Altered operation of the \textit{controlled process}, Attrition failure, Pervasion failure \\ \hline
  \end{tabular}

\end{table}
\normalsize

One common category of effect to the control loop is the inconsistent
\textit{process model}, as seen from Table
\ref{table:control-effects}. Let us explore this effect further. Young
and Leveson \cite{young2014integrated} note that many accidents stem
from the inconsistencies between the \textit{process model} and the
state of the \textit{controlled process}.  Inconsistency of the
\textit{process model} can represent effects stemming from sensor
tampering, or effects on the \textit{measured variables} or the
feedback to the sensors, as shown in the cyber effects in Table
\ref{table:physical-effects}, and the impact of these cyber effects in
Table \ref{table:control-effects}. Changes to the actuators and the
\textit{controlled variable}s can also result in altered
\textit{controlled process} behavior. These effects may change the
\textit{process model} to be inconsistent as well, since the
operations that were applied to the \textit{controlled process} could
be different from what the commands to actuators suggested.  However,
not all effects could be represented by inconsistency of the
\textit{process model}. In particular, the tampering of
\textit{process input} directly into the \textit{controlled process}
may not lead to an inconsistent \textit{process model} if a sensor is
able to pick-up the changes in the \textit{controlled process} that
result from this input.  Leveson \cite{leveson2013stpa} points out
that a \textit{process model} can be incomplete or incorrect. If the
\textit{process model} is incomplete or incorrect, it can be also
inconsistent with the state of the \textit{controlled process} even
without cyber effects applied. Even if the \textit{process model} is
(statically) complete and correct, inconsistencies may arise
dynamically during its operation. This may be due, for example, to
adversarial modifications of signals causing a controller to have a
false view of the state transition actually taken by the controlled
process, even if the actual transition taken is possible in the
\textit{process model}. In the next section, we formalize the concept
of inconsistent process model as Process Model Inconsistency (PMI)
effect, and prove its properties.

\section{Process Model, States, and Properties}   \label{sec:states}
%
In the previous section, we identified inconsistent process model as
an important control loop effect caused by other effects. In this
section we define this effect precisely so that the limitations on the
observability of this effect can be studied.  As we described in
Section \ref{sec:control-loop}, a cyber-physical system has at least
one control loop, with messages or signals transmitted among the
components participating in the control loop. These components may be
distributed geographically, or logically separated, and hence we need
to model the dynamic aspects of a cyber-physical system as a
distributed system of communicating components, where each such
component can be a system with its own subcomponents. A
\textit{process model} describes this dynamic behavior of the system,
and a state space may be used to represent the \textit{process model}
and its properties.  A process model may be used to represent both
(a)~the controller's belief about the state of the controlled process,
(Figure~\ref{fig:control-loop}) and (b)~the actual state of that
process. In this section, we formalize process model inconsistency
(PMI) as a discrepancy between these two models and characterize the
ways in which it can arise.
\subsection{Process Model and
  Observability}\label{sec:process-model-obs}
The term \textit{process model} is used to describe the dynamic behavior
of an individual component of a system. We use the term \textit{global
  process model} for the synthesized process model of all the
components.  A formalism to study such resultant behavior of
synthesized process models in System of Systems (SoS) is the Discrete
EVent Simulation (DEVS) formalism, applicable to digital and analog systems \cite{Zeigler:2000:TMS:580780}. 

The DEVS formalism accomplishes SoS modeling by defining two types of
models: atomic and coupled. When a system cannot be decomposed any
further, its behavior is specified through an atomic DEVS model. A
coupled model allows SoS constructs to be built as a hierarchical
structure that comprises atomic and coupled models.  Let us review
atomic DEVS and coupled DEVS model definitions
\cite{Zeigler:2000:TMS:580780} below:
\begin{definition}
$DEVS_{atomic}= <X,S,Y,\lambda, \delta_{int},\delta_{ext}, \delta_{con}, ta>$, where 
\begin {itemize} 
\item $X$ is the set of inputs described in terms of pairs of port and value: $\{p,v\}$, 
\item $Y$ is the set of outputs, also described in terms of port and value: $\{p,v\}$, 
\item $S$ is the state space that includes the current state of the atomic model, 
\item $\delta_{int}: S\rightarrow S$ is the internal transition function,
\item $\delta_{ext}: Q \times X^b \rightarrow S$ is the external
  transition function that is executed when an external event arrives
  at one of the ports, changing the current state if needed,
  $Q = \{ (s,e)| s \in S$,$0 \le e \le ta(s)\}$ as the total state
  set, where $e$ is the time elapsed since the last external
  transition, and $X^b$ is the set of bags over elements in $X$;
\item $\delta_{con}: S \times X^b \rightarrow S$ is the confluent
  function, subject to
  $\delta_{con} (s,\varnothing) = \delta_{int}(s)$ that is executed if
  $\delta_{ext}$ and $\delta_{int}$ end up in collision; and
\item $\lambda: S \rightarrow Y$ is the output function that is
  executed after internal transition function is completed,
\item $ta(s): {R}_{0,\infty}^+ $ is the time advance function.
\end{itemize}
    \label{def:patomic}
\end{definition} 

\begin{definition} \label{def:pcoupled}
  $DEVS_{coupled} = <X,Y,D,\{M_{d}\},\{I_{d}\},\{Z_{i,d}\}> $, for
  each component model, $d \in D$, where,
\begin{itemize}
\item $D$ is a set of labels assigned uniquely to each component model
\item $M_d$ is a DEVS model of $d$
\item $I_d$ is the \textit{influencer} set of $d: I_d \subseteq D \cup \{DEVS_{coupled}\}, d \notin I_d$, and for each $i \in I_d$,\\
  $Z_{i,d}$ is an $i$-to-$d$ coupling such that\\
  \hspace*{1cm}$Z_{i,d}: X \rightarrow X_d$ is an external input coupling (EIC), if $i = DEVS_{coupled}$\\
  \hspace*{1cm}$Z_{i,d}: Y_i \rightarrow Y$ is an external output coupling (EOC), if $d= DEVS_{coupled}$\\
  \hspace*{1cm}$Z_{i,d}: Y_i \rightarrow X_d$ is an internal coupling
  (IC), if $i \ne DEVS_{coupled}$ and $d \ne DEVS_{coupled}$
\end{itemize}
\end{definition}

Hybrid automata theory \cite{henzinger2000theory} is another
established formalism used in robotics for behavior modeling. In both
formalisms, the dynamic behavior of the system is defined using
states.  Therefore, we use states for representing process models.

In a distributed system, the inputs are delivered to system components
using messages, and output messages are generated by system
components. Therefore, in Definition \ref{def:patomic}, the inputs and
outputs are messages. We assume that a message contains values
assigned to a set of message variables. The components of a system
also have component variables whose values determine the state of that
component. The values in message variables are mapped to and from
component variables as the system runs. In CPS, the component
variable values of the controller are either used to store (1) the values of \textit{measured variables}  encapsulated in messages from sensors, or (2) the commands or estimations of \textit{controlled variables} (Figure
\ref{fig:control-loop}). We discuss the concept of state in more detail below.

%

In Definition \ref{def:patomic}, $S$ is the state space of an atomic
component of a cyber-physical system. As a running example, let us
consider an elevator with a simple control loop. The controller must
maintain a process model for the various components it controls. Among
those is the elevator car. We assume this process model encodes two
possible states of the elevator car: RUNNING or STOPPED. Its internal
model of the full system would also contain states for other
components such as the status of the doors on various floors. The
global state would then be an $n$-tuple of local states
$\Sigma = (S_1,..., S_n)$, where $S_i$ is the local state of the
component $i$~\cite{cooper1991consistent,babaoglu1993consistent}. For
simplicity, we focus only on models for atomic components since PMI
can already arise in this case. We denote the state space of an atomic
model as~$S$. 


The controller determines this local state from the values stored in a
sequence of component variables, wherein each constituent component variable
has a set of potential values. A component variable may be assigned values
during the system operation, which may be normal operation or abnormal
operation caused by natural failures or cyber attacks. These values
may be discrete, or analog. For analog values such as temperature
readings, we assume they are discretized. In our case, we imagine the
controller populates two variables (statusCarDoor, and motorRunning)
based on data from sensors. The cartesian product of all of the
possible values of these component variables for each component defines
all the possible combination of values these variables can assume. We
define a multivariable to capture this concept.
\begin{definition}
\label{def:multivariable-set}
A \textit{multivariable}, ${V}^{S}$ = $<v^1,v^2,...,v^m>$ may
be defined for a state space $S$, where $v^j$ may assume any
values from the set of values in $P^j$. The potential ordered
m-tuple values the multivariable can assume are in the multivariable
space $P$ = ($P^1 \times P^2 \times ... \times P^m$). Elements $p\in
P$ are called \emph{variable assignments}.
\end{definition}
In the elevator example, the controller maintains two component variables
to store its state: $<$statusCarDoor, motorRunning$>$.  The
statusCarDoor variable can assume any value from the set
$P^1=$ \{CLOSED, OPEN\}, and the motorRunning variable can assume any
value from the set $P^2=$ \{ON, OFF\} (Table~\ref{tab:carctrl}).
In the table, for brevity, we use "*" to denote all possible values
of that variable.

The mapping of variable assignments to the states of the elevator car
are also shown in Table~\ref{tab:carctrl}. This mapping is the state model, defined below. The constituent variables of a multivariable used in such mapping in the state model are also referred to as \textit{state variables}.
\begin{table}
  \caption{Elevator Car Controller Multivariable}
  \label{tab:carctrl}
  \begin{tabular}{ccl}
    \toprule
    statusCarDoor&motorRunning&state\\
    \midrule
    CLOSED & * & RUNNING\\
    OPEN & OFF& STOPPED\\
  \bottomrule
  \end{tabular}
\end{table}
\begin{definition}
\label{def:state-model}
A \emph{state model} is a triple ($P$, $F$, $S$) where $F: P
\rightarrow S$ a surjective partial observation function.
A variable assignment $p\in P$ is \emph{observable} iff $p$ is in the
domain of~$F$. 
\end{definition}
The use of a partial function~$F$ in
Definition~\ref{def:state-model} is important. Variable assignments
not in the domain of~$F$ typically correspond to combinations of
values assigned to variables that are not thought to be possible. For
example, in the model of the elevator car maintained by the car
controller, there is no state associated to the variable assignment
(OPEN, ON) because the car motor should never be running while the
door is open. An implementation may have enough foresight to include
an ERROR state in the state space~$S$. This is easy enough to do with
simple models. However, with more complex models, some variable
assignments may not be observable because the programmers did not
think they were possible. Exactly how the state gets updated (or not)
will depend on the details of the implementation. For the purposes of
our formalization, it is sufficient to allow the observation function
to be partial and consider variable assignments outside the domain
of~$F$ to be unobservable.
\begin{definition}
  \label{def:known-process-model}
  A \emph{process model} $(P,F,S,T)$ combines a state model with a
  transition relation $T\subseteq S\times S$. 
\end{definition}
\subsection{Incorrectness and
  Incompleteness}\label{sec:incompleteness}
The process model maintained by a controller represents only those
states and transitions the controller knows to expect. We therefore
refer to the process model maintained by a controller as a
\emph{known} process model ($P_k, F_k, S_k, T_k$). Unfortunately,
reality is almost always richer than this model. For example, the
``true state space'' of an elevator car would be more than just the
set \{RUNNING, STOPPED\}. It would capture whether the car was moving
UP or DOWN, what floor it is at, if it is between floors, etc.

Considering all possible measurable variables of a system we may
imagine a \emph{potential state space} $S_P$ resulting from a state
assignment function $F_P$ that serves as an upper bound on what states
the controlled component can actually be in. In general this potential
state space may not even be finite. For example, using the natural
numbers to represent the possible values for the floor representing
the elevator car's position while stopped gives an infinite set of
possibilities. In practice, reality is bounded in various ways. For
example, in a 5-story building the elevator car can never be on
floor~6. When defining the ``true'' state of a controlled component,
we may imagine that there is some finite set of variables that can be
measured, and that access to those values would provide an accurate
picture of reality, that we call the \textit{ground truth process model}.

\begin{definition}\label{def:ground-truth-model}
  The \emph{ground truth process model} for a component is a
  process model $(P_r, F_r, S_r, T_r)$ such that every state of $S_r$
  is reachable via $T_r$ from some initial state $s_0$. That is, for
  every state $s_n\in S_r$, there is some sequence
  $s_0,s_1,\dots,s_{n-1},s_n$ such that $(s_i,s_{i+1})\in T_r$ for
  every $0\leq i\leq n$. 

\end{definition}

The ground truth process model is not typically known a priori. For a
non-trivial system it may be hard to create an exhautive ground truth
process model consisting of all reachable states. We would like to
stress that the \textit{ground truth process model} is not expected to
be built a priori, or at any time, by the system designers or
operators. Rather, the ground truth process model is the hypothetical
and accurate behavior model of the system under normal and some
abnormal operations. The purpose of defining the ground truth process
model is to contrast it with the \textit{known process} model. There
can also be multiple \textit{ground truth} process models for the same
system under different abnormal operations (e.g. under different
adversarial assumptions). When new abnormal operations happen,
naturally the \textit{ground truth} process model will also expand to
include the new reality. Therefore, one may think of the
\textit{ground truth} process model as the process model of the system
under the normal and abnormal operations we are considering for
analysis.

A known process model and a ground truth process model can differ in
two primary ways. The known process model can either be incorrect,
incomplete, or both, as defined below.

\begin{definition}\label{def:incorrect-incomplete}
  Let $(P_k, F_k, S_k, T_k)$ and ($P_r, F_r, S_r, T_r$) be known and
  ground truth process models respectively. The known process model is
  \emph{incorrect} iff $S_k\setminus S_r \ne \emptyset$ or
  $T_k\setminus T_r \ne \emptyset$. The known process model is
  \emph{incomplete} iff $S_r\setminus S_k \ne \emptyset$ or
  $T_r\setminus T_k \ne \emptyset$. When the known process model is
  incomplete, a \emph{forced state} is any state
  $s\in S_r\setminus S_k$ and a \emph{forced transition} is any
  transition $t=(s_1,s_2)\in T_r\setminus T_k$.
\end{definition}

Intuitively, incorrectness refers to errors, and incompleteness refers to missing transitions and states in the known process model. In practice, it is possible for the known process model maintained by
the controller to be incorrect in the sense of
Definition~\ref{def:incorrect-incomplete}. However, in the context of
security incorrectness is not always a cause for concern. If one can make security
guarantees based on assuming the system can reach more states than it
really can, this typically means that those security guarantees would
still hold in the more restricted system without those states.

Incompleteness is somewhat different from incorrectness. It is a
simple application of the definitions to note that for an incomplete
known process model, there must be either a forced state or a forced
transition. These are depicted in Fig.~\ref{fig:forced-transition},
where known states and transitions are represented with solid lines,
and forced states and transitions are represented with dotted
lines. Thus, $s_r$ is forced state, and $(s_1,s_r)$ is forced
transition into the forced state from known state $s_1$. Similarly,
$(s_3, s_2)$ is a forced transition between known states.

\begin{figure}

    \includegraphics[height=1.5in, width=1.5in]{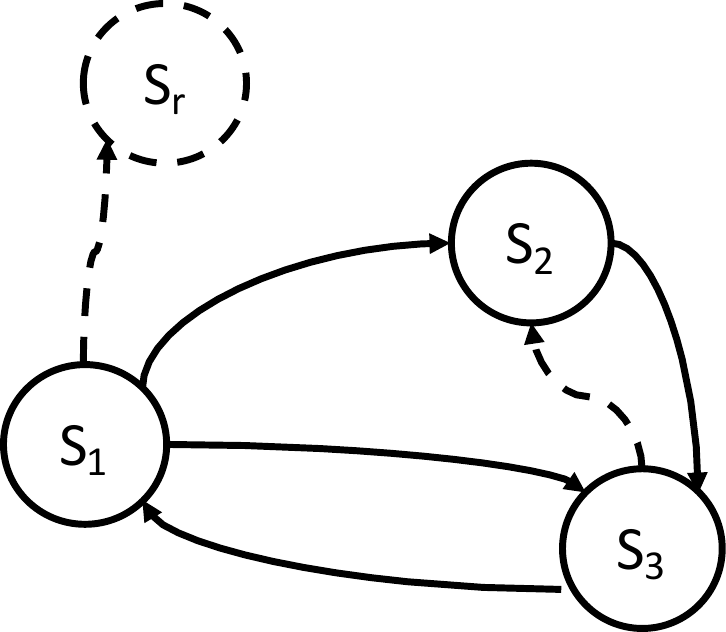}

  \caption{Forced State and Transition}
    \label{fig:forced-transition}
\end{figure}

Forced states and transitions naturally pose potential dangers that
incorrectness doesn't. Namely, safety properties satisfied by the
known process model may not be satisfied by the ground truth process
model. In a distributed system, safety
properties~\cite{alpern1987recognizing} of the system must evaluate to
true in the states of the system during operations. Safety properties
are inherent properties of any given state, and are not dependent on
the details of attacks or failures that caused the system to enter
that state. We apply this concept of safety properties to the states
of the controlled process. Hybrid automata use the concept of a
\textit{bad state} to describe states where at least one safety
property is violated~\cite{raskin2005introduction}.  We define
\textit{bad state} and \textit{normal state} formally below.
\begin{definition}
  \label{def:normal-state}
  A \textit{normal state}, $s_{n} \in S_{n} \subseteq S_P$, is a state
  where all safety properties will evaluate to true. $S_n$ is the
  normal state space. A \textit{bad state},
  $s_{b} \in S_{b} \subseteq S_P$, is a state where at least one safety
  property will evaluate to false. $S_b$ is the bad state space.
\end{definition}

The definition of the bad states makes no reference to the known or
ground truth process models. The general case (assuming a correct
known process model) is depicted in
Fig.~\ref{fig:state-space}. Notice, in particular, that there may be
many states in the ground truth process model that are unknown, yet
are not bad states. That is, forced states are not necessarily bad
states. This means that the mere fact of entering unknown states need
not violate safety properties of interest. However, if
$(S_r\setminus S_k) \cap S_b \neq \varnothing$, the attacker can
succeed.

For the elevator example, consider the situation depicted in
Fig.~\ref{fig:CarCtr-process-model}. We could define a safety
property: "Elevator Car will not move with the car door open." Given
this safety property, the states with solid outlines (described in
Table~\ref{tab:carctrl}) are normal states. A Car Controller state X
corresponding to $<$OPEN,ON$>$ is a \textit{bad} state, since we don't
want the car running with its door open. In this example, the bad
state X in Fig.~\ref{fig:CarCtr-process-model} is a forced state
because there is some way to transition into it from the STOPPED
state. An adversary able to force a transition into this state would
violate the safety property. Alternate safety property definitions may
result in differing \textit{normal} and \textit{bad} state
designations.

\begin{figure}

    \includegraphics[height=1.5in, width=1.5in]{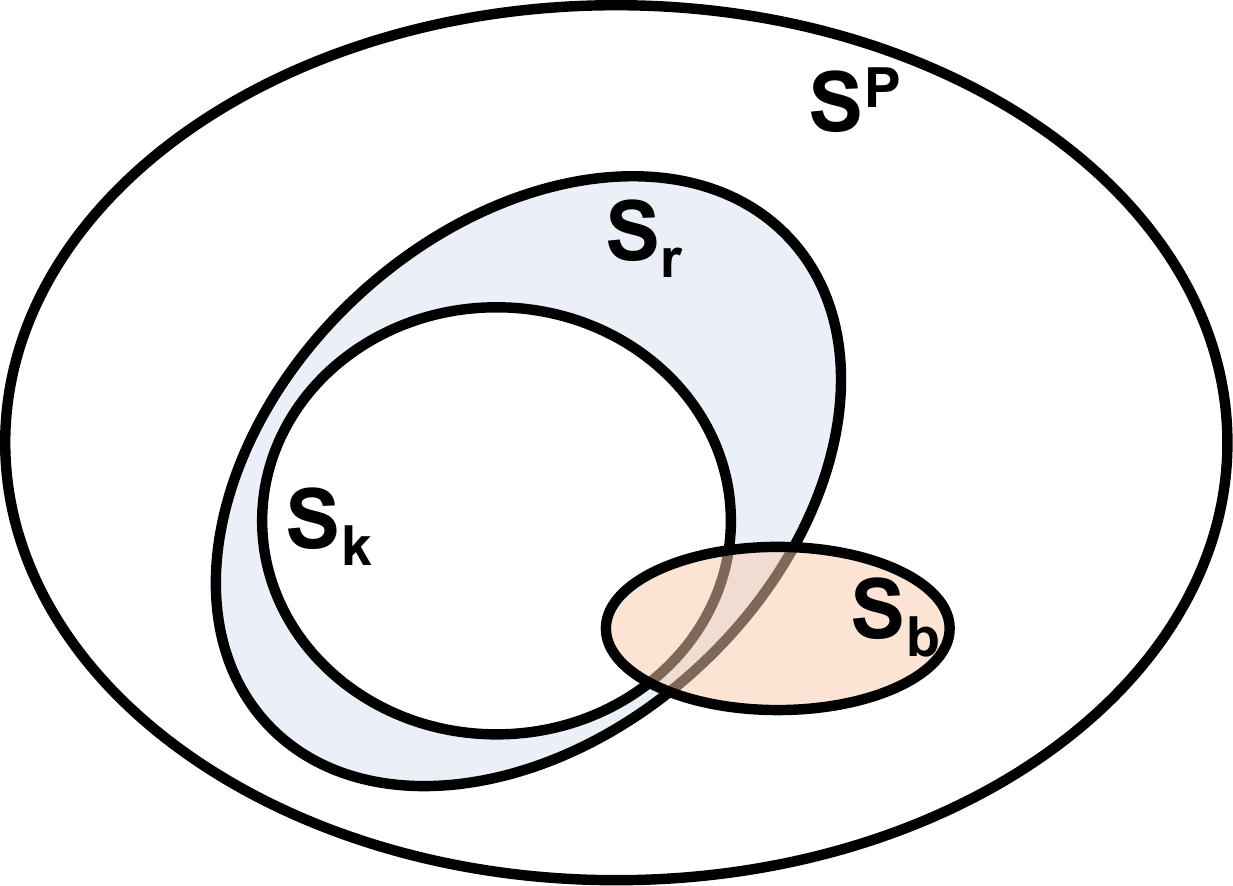}

  \caption{State Space}
    \label{fig:state-space}
\end{figure}

The constituent variables of the multivariable may be analyzed
directly for the evaluation of safety properties. That is, rather than
evaluate a safety property against a state, it may be evaluated
against a \emph{variable assignment}. Regardless of whether a variable
assignment is in the domain of $F_r$ or $F_k$, safety properties may
be evaluated against them. It should be noted, however, that for
consistency in this case, when two variable assignments get mapped to
the same state, they should either both satisfy the safety property or
both fail the safety property.

\begin{figure}

    \includegraphics[height=1in, width=1in]{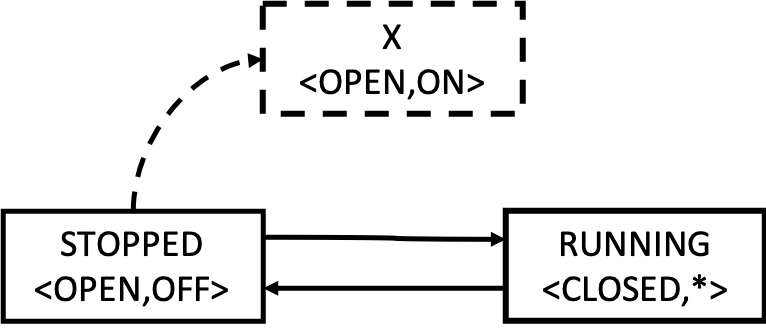}

  \caption{Car Controller State Space}
    \label{fig:CarCtr-process-model}
\end{figure}

\begin{lemma}
  \label{lemma:inconsistency}
  Let $(P_k, F_k, S_k, T_k)$ and $(P_r, F_r, S_r, T_r)$ be a known
  process model and a ground truth process model, respectively. If
  there is a forced state, then there is a forced transition.
\end{lemma}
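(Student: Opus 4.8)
The plan is to leverage the reachability requirement built into the ground truth process model (Definition~\ref{def:ground-truth-model}) together with the typing constraint on transition relations from Definition~\ref{def:known-process-model}, namely that $T_k\subseteq S_k\times S_k$. The intuition is that a forced state $s_r\in S_r\setminus S_k$ cannot sit in the ground truth model in isolation: since every ground truth state is reachable from an initial state by $T_r$-transitions, some transition of $T_r$ must enter $s_r$, and no such transition can belong to $T_k$ because its target lies outside $S_k$.

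First I would fix a forced state $s_r\in S_r\setminus S_k$, which exists by hypothesis. By the reachability clause of Definition~\ref{def:ground-truth-model}, there is a finite sequence $s_0,s_1,\dots,s_n=s_r$ with $(s_i,s_{i+1})\in T_r$ for each consecutive pair. Assuming $s_r$ is not itself the initial state (addressed below), we have $n\geq 1$, so the final pair $(s_{n-1},s_r)$ is a genuine transition lying in $T_r$. Next I would observe that $(s_{n-1},s_r)\notin T_k$: by Definition~\ref{def:known-process-model}, $T_k\subseteq S_k\times S_k$, so every element of $T_k$ has its second coordinate in $S_k$, whereas $s_r\notin S_k$. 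Combining these two facts gives $(s_{n-1},s_r)\in T_r\setminus T_k$, which is exactly a forced transition by Definition~\ref{def:incorrect-incomplete}. Note this argument is insensitive to whether $s_{n-1}$ is known or forced, and it even tolerates a self-loop $s_{n-1}=s_r$, since only the target coordinate matters.

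The main obstacle is the degenerate case in which the forced state coincides with an initial state $s_0$, where the reachability witness is the length-zero path $s_0$ and exhibits no transition at all. The cleanest way to close this gap is to invoke the natural modeling assumption that the system begins operation in a recognized state, i.e.\ that every initial state lies in $S_k$; under this assumption a forced state is never initial, which forces $n\geq 1$ and makes the preceding argument go through. I would state this assumption explicitly, noting that it matches the picture in Fig.~\ref{fig:forced-transition}, where the forced state is entered from a known state. If one preferred not to assume known initial states, the lemma would instead require a side hypothesis ruling out an isolated, transition-free forced initial state, since the statement is otherwise false for the trivial model whose only state is such a state.
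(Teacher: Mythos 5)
Your proof is correct and follows essentially the same route as the paper's: take the reachability witness for the forced state guaranteed by Definition~\ref{def:ground-truth-model} and observe that its final transition lies in $T_r$ but cannot lie in $T_k\subseteq S_k\times S_k$ because its target is outside $S_k$. Your handling of the degenerate case where the forced state is itself the initial state (so the reachability witness is a zero-length path exhibiting no transition) is in fact more careful than the paper, whose proof silently assumes the witnessing sequence is nonempty.
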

\begin{proof}
  Let $s_n\in S_r$ be the forced state assumed to exist. By
  Definition~\ref{def:ground-truth-model}, there is some sequence
  $s_0,s_1,\dots,s_{n-1},s_n$ such that $(s_i,s_{i+1})\in T_r$ for
  every $0\leq i\leq n$. In particular $t=(s_{n-1},s_n)\in T_r$ but
  $t\not\in T_k$ because $T_k\subseteq S_k\times S_k$ and $t\not\in
  S_k\times S_k$.
\end{proof}
In the elevator example, the state X in the \textit{ground truth}
process model for Car Controller (Figure \ref{fig:CarCtr-process-model}),
cannot be disconnected from the other states, since this state is
reachable in the \textit{ground truth} process model due to a cyber attack or
a system failure. This transition is shown by the dotted arrow from
STOPPED state to X state.

\subsection{Process Model Inconsistency}\label{sec:ground-truth}

In the previous section, our treatment of incorrectness and
incompleteness of the known process model with respect to a ground
truth process model naturally invoked the notion of safety
property. Young and Leveson~\cite{young2014integrated} note that
problems arise due to an \emph{inconsistency} between the known
process model and the ground truth. This holds irrespective of whether
or not particular safety properties are violated. That is, process
model inconsistency (PMI) is a potentially deleterious effect in
itself. In this section we therefore formally define PMI and prove
that it necessarily poses a danger for all incomplete known process
models.

Intuitively, PMI occurs when the observations made in the known
process model differ from those of the ground truth process model. In
order to talk meaningfully about observations of the ground truth
\emph{from within} the known process model, we need a way of
connecting the two process models to define the known process model's
observations of the ground truth. Since observations in the known
process model correspond to interpretations of variable assignments
$p \in P_k$, it is sufficient to connect the ground truth variable
space $P_r$ with the known variable space $P_k$.

In general, these two spaces may not be related. For example, the
variables tracked in the known process model might not be direct
measurements. However, it is with no loss of generality that we may
assume the ground truth variable space to be a superset of the known
variable space. We can always expand $P_r$ to contain known variables
not otherwise present, and simply allow $F_r$ to be insensitive to the
values of these extra variables. This will not interfere with the
established aspects of the ground truth model.

We formally define this structured connection between known process
models and ground truth models below.

\begin{definition}\label{def:submodel}
  Let $(P_k, F_k, S_k, T_k)$ and ($P_r, F_r, S_r, T_r$) be known and
  ground truth process models respectively. The models are
  \emph{connected by inclusion and projection} (or just
  \emph{connected}) iff $P_k = P_k^1\times P_k^2 \cdots \times P_k^m$,
  and $P_r = P_r^1\times P_r^2 \times \cdots \times P_r^n$, where
  $n > m$ and $P_k^i\subseteq P_r^i$ for $1\leq i\leq
  m$. $\iota:P_k\hookrightarrow P_r$ is the natural inclusion of $P_k$
  into $P_r$, where fixed values for the variable $v^{m+1},\dots,v^n$
  are chosen. $\pi:P_r \twoheadrightarrow P_k$ is the inverse
  (partial) function.
\end{definition}

Since the order of presentation of the $P^i$ is arbitrary, we choose a
consistent order for $P_k^i$ and $P_r^i$ to ensure $\pi$ and $\iota$
work component-wise in the natural way.  When two models are
connected, their connection can be depicted as in
Fig.~\ref{fig:PMI}. The function $\iota$ and $\pi$ allow us to
``move'' from one model to the other. This ultimately allows a clean
definition of inconsistency. In particular, we can start with a
well-defined notion of observations of ground truth within the known
process model.

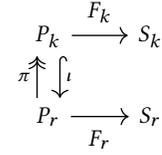
\begin{figure}
  \[
    \begin{tikzcd}
      P_k\ar[d,"\iota",hookrightarrow, shift left=2]\ar[r]\ar[r,phantom,shift left=4,"F_k"] & S_k\\
      P_r\ar[u,twoheadrightarrow,"\pi", shift left=2]\ar[r]\ar[r,phantom, shift right=4,"F_r"'] & S_r
    \end{tikzcd}
  \]
  \caption{Connected Process Models}
    \label{fig:PMI}
\end{figure}

\begin{definition}\label{def:observable}
  A variable assignment $p\in P_r$ is \emph{observable} in the known
  process model iff $p\in \mathop{dom}(\pi)$ and
  $\pi(p)\in \mathop{dom}(F_k)$. $p\in P_r$ is \emph{correctly
    observed} in the known process model iff $p$ is observable and
  $F_k(\pi(p)) = F_r(p)$. A pair of variable assignments ($p_a,p_b$)
  is observable iff its component variable assignments are
  observable. Similarly the pair $(p_a,p_b)$ is correctly observed iff
  its component variable assignments are correctly observed.
\end{definition}

This definition uses $\pi:P_r\twoheadrightarrow P_k$ to mediate
observations. Inconsistencies may arise because we cannot ensure that
$F_r$ and $F_k$ provide consistent interpretations of the common
portions of variable assignments. That is, observations in the known
process model based on access to variables in $P_k$ interpreted
according to $F_k$ might differ from observations in the ground truth
process model based on the larger set of information in $P_r$ with
accurate interpretation~$F_r$.

There are two main things that give rise to disagreements. The current
state may simply be unobservable from within the known process model,
or it may make an incorrect observation. 

Since $P^i_k \subseteq P^i_r$ for $1\leq i\leq m$, it is possible for
this function to be undefined for some values. Consider for example a
variable assignment $p = (p^1,p^2,\dots,p^n)\in P_r$ where
$p^1\in P_r\setminus P_k$. The resulting variable assignment
$(p^1,p^2,\dots,p^m)$ is not in $P_k$, so $p$ is not in the domain of
$\pi$. This results in the variable assignment not representing an
observable state in the known process model. Even if $\pi$ is defined
on a given $p\in P_r$, it is possible that $\pi(p)$ is outside the
domain of the partial function $F_k$. Here too, the variable
assignment does not represent an observable state. When the variable
assignment is observable, the known process model assigns it a
definite state. An incorrect observation is one in which this state
disagrees with the state observed by the ground truth model.

\begin{definition}
  \label{def:process-model-inconsistency}

  Let $(P_k, F_k, S_k, T_k)$ and $(P_r, F_r, S_r, T_r)$ be known and
  ground truth process models respectively that are connected. A
  transition $t=(s_a,s_b)\in T_r$ is an instance of \emph{Process
    Model Inconsistency} (PMI) iff there are some variable assignments
  $(p_a, p_b)$ resulting in $t$ and $(p_a,p_b)$ is unobservable or
  incorrectly observed in the known process model.
\end{definition}
Sometimes, cyber effects may need multiple \textit{forced states},
\textit{forced transitions}, or both to describe them, as we show in
the example in Section \ref{sec:case-study}. The theorem below
explores the limitations on observability of forced transitions in
\textit{ground truth} model given that we are only equipped with the
\textit{known process model}.

\begin{theorem}
  \label{theorem:detection}
  Let $(P_k, F_k, S_k, T_k)$ be an incomplete model with respect to
  the ground truth model $(P_r, F_r, S_r, T_r)$.  Assume they are
  connected by inclusion and projection. Then either the ground truth
  model contains at least one instance of PMI, or there is a forced
  transition that is correctly observed.

  \begin{proof}
  By Def.~\ref{def:incorrect-incomplete}, the ground truth process model
  contains either a forced state or a forced transition. But
  Lemma~\ref{lemma:inconsistency} tells us that the existence of a forced
  state implies a forced transition. So we know there is some
  $(s_a,s_b)\in T_r\setminus T_k$. We take cases on whether or not
  $s_a\in S_k$ and $s_b\in S_k$. We examine observability according to
  Def.~\ref{def:observable} using the map $\pi:P_r\twoheadrightarrow
  P_k$ which must exist according to Def.~\ref{def:submodel}.
  \\

  \textbf{Case 1.} At least one of $s_a$ or $s_b$ is in
  $S_r\setminus S_k$. Without loss of generality, let it be $s_a$. We
  will now establish what the known process model might
  observe. Consider any element $p\in P_r$ giving rise to $s_a$ (i.e
  $F_r(p)=s_a$). Such a $p$ exists because $F_r$ is surjective by Def. \ref{def:state-model}.

  \textbf{Case 1a.} If $p$ is not observable, then
  ($s_a,s_b$) is an instance of PMI by Def. \ref{def:process-model-inconsistency}.

  \textbf{Case 1b.} If $p$ is observable, then
  $F_k(\pi(p))=s_a'\in S_k$ is well-defined. But since
  $s_a\not \in S_k$, $s_a\ne s_a'$, the transition is incorrectly
  observed (Def. \ref{def:observable}). Thus $(s_a,s_b)$ is an instance of PMI by Def. \ref{def:process-model-inconsistency}.
  \\

  \textbf{Case 2.} Both $s_a$ and $s_b$ are in $S_k$. Consider any
  $(p_a,p_b)$ such that $F_r(p_a) = s_a$ and $F_r(p_b) = s_b$.

  \textbf{Case 2a.} At least one of $p_a$ or $p_b$ is unobservable. In
  this case, $(s_a, s_b)$ is an instance of PMI by Def. \ref{def:process-model-inconsistency}.

  \textbf{Case 2b.} Both $p_a$ and $p_b$ are observable. Thus we can
  define $s_a' = F_k(\pi(p_a))$ and $s_b' = F_k(\pi(p_b))$.

  \textbf{Case 2b(i).} Either $s_a' \ne s_a$ or $s_b' \ne s_b$. In
  this case, $(s_a,s_b)$ is incorrectly observed and, hence, it is an
  instance of PMI by Def. \ref{def:process-model-inconsistency}.

  \textbf{Case 2b(ii).} $s_a' = s_a$ and $s_b' = s_b$. In this case,
  the transition is correctly observed as $(s_a, s_b)$, but this
  transition was assumed to be a forced transition, so the last clause
  of the conclusion is satisfied.

\end{proof}
\end{theorem}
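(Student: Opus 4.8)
The plan is to unwind the definitions until the theorem collapses to a clean dichotomy about a single forced transition. First I would invoke Def.~\ref{def:incorrect-incomplete}: incompleteness guarantees either a forced state or a forced transition. Since Lemma~\ref{lemma:inconsistency} shows that a forced state already entails a forced transition, in every case I may fix one forced transition $t = (s_a, s_b) \in T_r \setminus T_k$ and argue about it in isolation; the disjunction in the conclusion will be discharged by examining how the known model observes this single $t$.

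Next I would produce witnesses in $P_r$. Because $F_r$ is a surjective partial function (Def.~\ref{def:state-model}), I can choose variable assignments $p_a, p_b \in P_r$ with $F_r(p_a) = s_a$ and $F_r(p_b) = s_b$, so that the pair $(p_a, p_b)$ gives rise to $t$. The whole argument then turns on whether this pair is observable and correctly observed in the known model, in the sense of Def.~\ref{def:observable}, mediated by the projection $\pi : P_r \twoheadrightarrow P_k$ supplied by Def.~\ref{def:submodel}. If $(p_a, p_b)$ is unobservable or incorrectly observed, then $t$ is immediately an instance of PMI by Def.~\ref{def:process-model-inconsistency}, and the first disjunct of the conclusion holds. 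The only remaining possibility is that $(p_a, p_b)$ is correctly observed, meaning $F_k(\pi(p_a)) = s_a$ and $F_k(\pi(p_b)) = s_b$.

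The crux — and the step I expect to carry the real content — is recognizing that correct observation silently forces both endpoints into $S_k$, since the codomain of $F_k$ is $S_k$. This lets me organize the final case split cleanly. If either $s_a$ or $s_b$ lies in the forced region $S_r \setminus S_k$, then correct observation is impossible, because the equality $F_k(\pi(p)) = F_r(p)$ would place a forced state in the range of $F_k$; hence $t$ must be an instance of PMI. Otherwise both endpoints already belong to $S_k$, and when the pair is correctly observed the known model reproduces exactly the transition $s_a \to s_b$; but $t$ was chosen to be forced, so this is precisely a forced transition that is correctly observed, yielding the second disjunct. The main obstacle is therefore definitional rather than computational: one must keep careful track that ``correctly observed'' bundles the two membership requirements $p \in \mathop{dom}(\pi)$ and $\pi(p) \in \mathop{dom}(F_k)$ together with the equality $F_k(\pi(p)) = F_r(p)$, and notice that this equality alone already constrains the observed state to lie in $S_k$, which is what makes a forced-state endpoint incompatible with correct observation.
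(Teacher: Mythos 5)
Your proposal is correct and follows essentially the same route as the paper: reduce to a single forced transition via Definition~\ref{def:incorrect-incomplete} and Lemma~\ref{lemma:inconsistency}, produce witnessing variable assignments by surjectivity of $F_r$, and split on observability and correctness of observation. The only difference is presentational --- you case-split on observability first and derive $S_k$-membership from correct observation, whereas the paper splits on $S_k$-membership first; your key step (correct observation forces the observed state into $S_k$) is just the contrapositive of the paper's Case~1b.
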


When a forced transition is observable, depending on the
implementation of the \textit{known process} model, this state
transition may be flagged as an error, or ignored.  Therefore, the
transition from the STOPPED state to X state in Car Controller is a
forced transition (Figure \ref{fig:CarCtr-process-model}). Given Theorem \ref{theorem:detection}, describing
instances of PMI using a state diagram of the \textit{known state}
space poses some interesting challenges, since the state diagram of
the \textit{known state} space can only be used to show forced
transitions among the known states.

Interestingly, the converse of Theorem~\ref{theorem:detection} is
false. That is, incompleteness is not necessary for PMI. Indeed PMI
can occur even when the known process model is both correct and
complete. This could be due to a tampered sensor sending a false
signal to the controller. Expanding on the elevator example, the
controller may receive a signal that the elevator car went 1~floor up,
when, in fact, it went 1~floor down. Both are possible, so there is no
inherent problem with the process model itself. Rather, PMI arises in
this instance due to differences in the observation functions $F_k$
and $F_r$. This means that improving the known process model may be
insufficient to fully address instances of PMI! It must be addressed
by fixing the system as a whole.



 \section{Illustrative Example}   \label{sec:case-study}
We will presently illustrate the theoretical results from the previous section using the example of an Automated Teller Machine (ATM) state machine. We use the ATM state machine described by Iqbal et al.~\cite{iqbal2017formal} for our illustration, since this model is formally verified. Figure \ref{fig:ATM-sm} reproduces the state machine from  \cite{iqbal2017formal} with minor modifications for readability.  Since this state machine is designed by the model developers of the ATM, therefore, this state machine  is the \textit{known process} model by Definition \ref{def:known-process-model}.

In this ATM model, the states of the ATM are specified in the rectangle boxes, and the state transitions are described by annotated arrows. The customer is not explicitly modeled, though the state diagram implies an external customer. The annotations describe either the conditions for a state transition, or the inputs to a state that could cause a state transition. These states define the state space, $S_k$.  Some of the arrows may be interpreted as internal transition functions, $\delta_{int}$. For example, ("Wrong PIN","Print Receipt") as in Figure \ref{fig:ATM-sm}. Some other transitions may be interpreted as external transfer functions $\delta_{ext}$. For example, when the "Insert Readable Card" external event occurs, the system transitions to "Request Password" state. There are $\lambda$ output functions that map the states to external outputs. For example, from "Verify Account" to an external system to "Verify Externally." 
In this example, we do not show the state variables, and assume a state assignment function exists for the \textit{known process} model. With respect to a safety property of "ATM will dispense cash to authenticated users," all the states in the diagram are normal states (Definition \ref{def:normal-state}). On the other hand, if the safety property is "Only authenticated users are allowed access to the ATM", the state "Wrong PIN" may be considered a \textit{bad} state. In the \textit{known process} model, this transition will be observed as a known transition from "Process Transaction" to "Dispense Cash."
 
\begin{figure}
  \includegraphics[height=3.5in, width=3.5in]{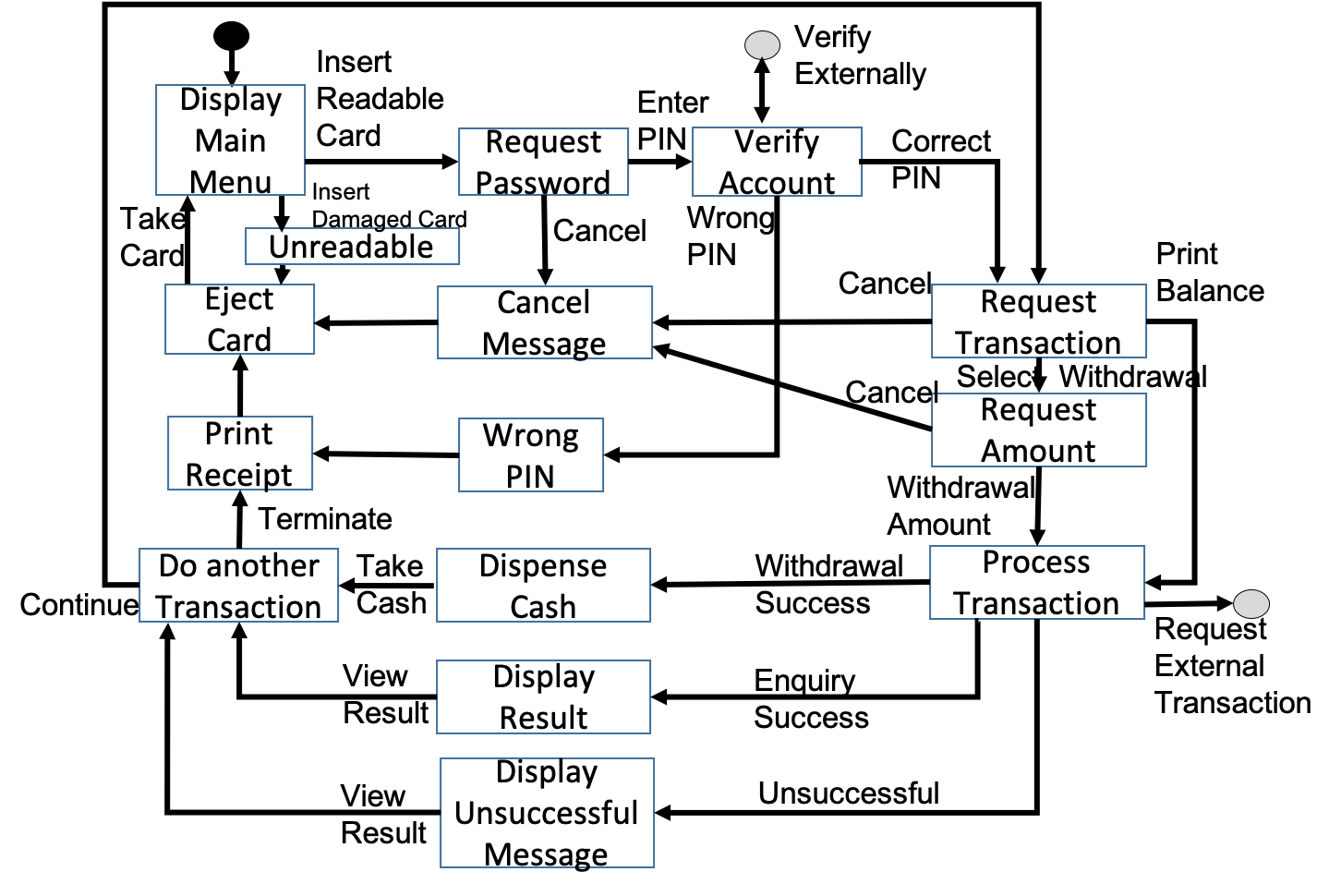}
  \captionof{figure}{ATM State Machine: Known Process Model}
    \label{fig:ATM-sm}
\end{figure}
\begin{figure}
  \includegraphics[height=3.5in, width=3.5in]{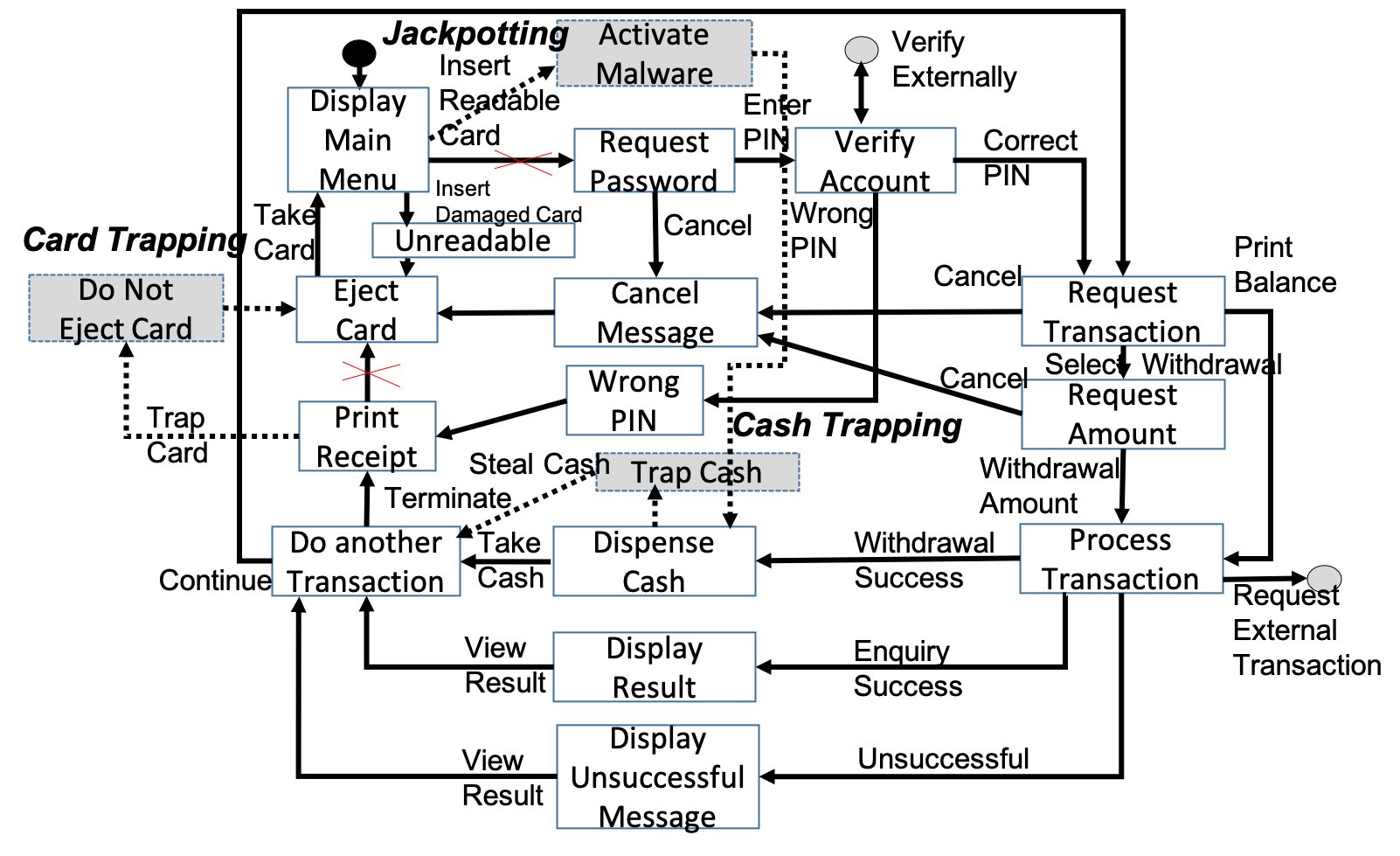}
  \captionof{figure}{ATM State Machine: Ground Truth Model}
    \label{fig:ATM-sm-effects}
\end{figure}

A fairly comprehensive summary of ATM attacks and detection mechanisms
are described by Priesterjahn et al. in
\cite{priesterjahn2015generalized}. The prominent ATM attacks fall in
to the following categories: card skimming, where the information in
the inserted bank card is skimmed for later use; card trapping, where
the bank card is trapped after the transaction is complete using an
extraneous device by the attacker in order to steal it later; PIN
capturing, where the user's PIN number is stolen while being used;
cash trapping, where the cash dispensed is trapped by the attacker
using an extraneous device in order to retrieve it later; brute force
safe opening to steal the cash in ATM; and using malware to attack the
ATM firmware \cite{priesterjahn2015generalized}. There are several
variations of a recent malware attack known as "jackpotting"
\cite{jackpot1}, and all these variations seek to change the firmware
processing of the ATM. Figure \ref{fig:ATM-sm-effects} describes the
reached states and the \textit{ground truth} model of the ATM system
while under two physical attacks: card trapping, and cash trapping, and under jackpotting cyber attack.

When a card trapping attack occurs, the ATM is prevented from going to
the "Eject Card" state by the attacker. Instead, the attacker's device
prevents the card from ejecting (shows as the "Trap Card" transition),
and once the customer leaves, attacker retrieves the card. The ATM
system is unable to observe this card trapping, and this forced
transition is observed incorrectly as a known transition by the ATM
system.  While under the cash trapping attack, once the "Dispense
Cash" state is exited, the system goes to "Trap Cash" state, where the
attacker has inserted a device to prevent dispensing of cash. The
attacker then removes the device, and steals the cash. Since the
\textit{known process} model does not have this forced transition, for
the ATM system, the forced transition is observed as the "Take Cash"
transition, a known transition. Again, the attack goes undetected. In
all these cases, there were no sensors or state variables to detect the
effects of these attacks, and the effects were undetectable using the
\textit{known process} model. In the jackpotting attack, after the
"Insert Readable Card" state, the system transitions to "Activate
Malware" state, and skips the authentication states, and directly
transitions to the "Dispense Cash" state, a transition from an unknown
reached state to a known state.

\subsection{Practical Implications}\label{sec:implications}
We believe the theoretical work described in Section \ref{sec:states} will have implications in several areas.  In this section, we discuss the practical implications  to attack detection, cyber Testing \& Evaluation (T\&E), and system design.

\subsubsection{Attack Detection}
In  Section  \ref{sec:process-model}  we established that  \textit{inconsistent process model} is a control loop effect of several kinds of cyber and physical effects on a cyber-physical system. In Section \ref{sec:process-model}, we formalized the concept of \textit{inconsistent process model} as \textit{Process Model Inconsistency} (PMI).
 In legacy systems, the existing sensors and instrumentation are
 primarily used to support the normal operations, and hence the \textit{known process} model of the
 system. Further, the \textit{known process} model is implemented
 using the controller firmware. Even if we assume the firmware has no
 bugs and accurately implements the known process model, a very
 unlikely situation in most systems, by Theorem
 \ref{theorem:detection}, we could predict it is likely that  some of
 the cyber attacks cannot be detected by just instrumenting the
 firmware, since this will not help make the known process model complete.

 While it is not possible to develop a comprehensive \textit{ground
   truth} process model a priori and identify all the missing state variables
 that need to be monitored for effects, it is important to maintain and augment the \textit{known process} model for
 critical system components when the system is operating, well after the
 design is complete. The possibility of incorrect observation of a
 forced transition as a known transition is a concrete possibility,
 implying if cyber attacks are only reported by observing the effects
 in the firmware of the controller, the attack reports will be
 inaccurate or underreported.  To detect attacks as they happen,
 dynamic behavior of a system needs to be analyzed, including the
 communications to and from the controller. Using threat intelligence
 information and circumstantial evidence may also need be used to
 augment detection capabilities, since in many real systems proving an
 attack has happened with evidence may not be feasible because systems
 are not instrumented to collect relevant evidence. 

 Theorem \ref{theorem:detection} opens the possibility of correctly
 observing a forced state transition, say ($s_a,s_b$). This is a kind
 of anomaly detection, and so some attacks may be correctly detected
 in this way. However, the potential to miss forced transitions points
 to limitations in the effectiveness of anomaly detection, because it
 is quite possible that anomalies may not be observable or correctly
 observed in the \textit{known process} model, leading to higher false
 negatives.

\subsubsection{Cyber T\&E of CPS}
Cyber T\&E concerns with the testing of CPS under various attack scenarios, and cyber modeling \& simulation (cyber M\&S) is a commonly used approach to conduct cyber T\&E \cite{Damodaran:2015:CMS}.  Observing the effects of attacks is a key aspect of cyber T\&E. 
The previous section points to the need to expand instrumentation and
sensors to variables beyond what is needed for system
operations. This need applies to models used in cyber M\&S also. Otherwise, the testers themselves may experience PMI
without being aware of it. 

\subsubsection{System Design}
An improved system design approach would need to focus on detection of
forced transitions, irrespective of whether these transitions lead to
normal or bad states. For example, in the ATM example, if somebody
attempts entering PIN more than 10 times, it is not detectable, though
the entire state machine for entering a wrong PIN is in the known
process model. The known process model may need to be expanded to
include "Count Wrong PIN Attempts" and associated transitions. Another
improvement that could help is the elimination of the default
assignment of states. For example, in the ATM example, once cash is
dispensed, there is no timer to detect whether the cash has left the
dispense tray in a timely manner. The ATM system, after "Dispense
Cash" state, could prevent the default transition to "Another
Transaction," if there was a timer that detects cash has been in the
dispense tray longer than a preset time. The cash trapping attack
might be detected this way.

\section{Conclusions}   \label{sec:conclusions}

 We introduced Process Model Inconsistency (PMI) as an important control loop effect of several types of physical and cyber attacks on Cyber-Physical Systems in Section \ref{sec:process-model}. We showed that it is quite possible to either not observe a PMI effect at all during or after an attack, or come to incorrect conclusions based on the observations of the effects of an attack on the controller or firmware. We illustrated the theoretical results with an example of an Automated Teller Machine (ATM) undergoing two physical and a malware attack.  We also described some practical implications of the limitations on observability  in the areas of attack detection, cyber T\&E, and system design. Evaluating these implications rigorously from a security perspective, and improving the security of new and legacy CPS based on these implications remain to be done. This paper does not address the impact of cyber attacks on liveliness properties, and we hope to address this in future work.



\begin{acks}
We thank Gabriel Pascualy for his work on the elevator. We are grateful to Hasan Cam for comments on an earlier version, and to the anonymous reviewers for their detailed comments.
\end{acks}


\bibliographystyle{ACM-Reference-Format}
\bibliography{pmi-full}

\end{document}